\documentclass[11pt]{article}
\usepackage[a4paper]{geometry}
\usepackage{amsfonts, amsmath, amssymb, amsthm, graphicx, caption, authblk, multirow, makecell, framed, float, xcolor, enumitem, tikz, hyperref}
\usepackage{rotating}
\definecolor{blk}{RGB}{63,63,63}
\setlength{\parskip}{0in}
\setlength{\parindent}{0.3in}
\setlength{\topmargin}{-0.2in}
\setlength{\textheight}{8.7in}
\setlength{\oddsidemargin}{0.2in}
\setlength{\evensidemargin}{0.2in}
\setlength{\textwidth}{6in}

\theoremstyle{definition}
\newtheorem{theorem}{Theorem}

\theoremstyle{remark}

\newcommand*{\mybox}[1]{%
  \framebox{\raisebox{0cm}[0.5\baselineskip][0.05\baselineskip]{%
    \hbox to 0.10cm {\hss#1\hss}}}\hspace{0.05cm}}

\begin{document}
\title{Wataridori is NP-Complete}
\author[1]{Suthee Ruangwises\thanks{\texttt{suthee@cp.eng.chula.ac.th}}}
\affil[1]{Department of Computer Engineering, Faculty of Engineering, Chulalongkorn University, Bangkok, Thailand}
\date{}
\maketitle

\begin{abstract}
Wataridori is a pencil puzzle that involves drawing paths in a rectangular grid to connect circles into pairs while satisfying several constraints. In this paper, we prove that deciding whether a given Wataridori puzzle has a solution is NP-complete via a reduction from Numberlink, another pencil puzzle that has previously been proved NP-complete.

\textbf{Keywords:} NP-hardness, computational complexity, Wataridori, Numberlink, puzzle
\end{abstract}

\section{Introduction}
\textit{Wataridori} is a pencil puzzle introduced by Nikoli \cite{janko}, a Japanese publisher that developed many famous pencil puzzles including Sudoku, Kakuro, and Slitherlink. The puzzle consists of a rectangular grid partitioned into polyominoes called \textit{regions}. (A polyomino is a shape formed by one or more unit squares connected orthogonally.) Some cells contain a circle, and some of these circles contain a number.

The objective of Wataridori is to connect all circles into pairs using paths that go between horizontally or vertically adjacent cells. Paths cannot cross or share a cell with one another. The two circles connected by a path must contain the same number, which must be equal to the total number of regions the path passes through. A path may enter and exit a region at most once. Circles without numbers are treated as wildcards and may represent any number. See Figure~\ref{fig1}. Note that it is not required for all cells in the grid to be covered by paths.

\begin{figure}
\centering
\begin{tikzpicture}
\draw[step=1cm,color={rgb:black,1;white,4}] (0,0) grid (6,6);

\draw[line width=0.6mm] (0,0) -- (0,6);
\draw[line width=0.6mm] (1,1) -- (1,2);
\draw[line width=0.6mm] (1,4) -- (1,5);
\draw[line width=0.6mm] (2,0) -- (2,1);
\draw[line width=0.6mm] (2,2) -- (2,6);
\draw[line width=0.6mm] (3,0) -- (3,5);
\draw[line width=0.6mm] (4,0) -- (4,3);
\draw[line width=0.6mm] (4,4) -- (4,5);
\draw[line width=0.6mm] (5,1) -- (5,6);
\draw[line width=0.6mm] (6,0) -- (6,6);

\draw[line width=0.6mm] (0,0) -- (6,0);
\draw[line width=0.6mm] (0,1) -- (3,1);
\draw[line width=0.6mm] (4,1) -- (6,1);
\draw[line width=0.6mm] (1,2) -- (6,2);
\draw[line width=0.6mm] (2,3) -- (4,3);
\draw[line width=0.6mm] (0,4) -- (1,4);
\draw[line width=0.6mm] (3,4) -- (6,4);
\draw[line width=0.6mm] (1,5) -- (5,5);
\draw[line width=0.6mm] (0,6) -- (6,6);

\node[draw,circle,minimum size=0.6cm] at (0.5,0.5) {2};
\node[draw,circle,minimum size=0.6cm] at (0.5,4.5) {};
\node[draw,circle,minimum size=0.6cm] at (0.5,5.5) {};
\node[draw,circle,minimum size=0.6cm] at (1.5,1.5) {};
\node[draw,circle,minimum size=0.6cm] at (1.5,2.5) {5};
\node[draw,circle,minimum size=0.6cm] at (1.5,3.5) {2};
\node[draw,circle,minimum size=0.6cm] at (3.5,2.5) {2};
\node[draw,circle,minimum size=0.6cm] at (3.5,3.5) {};
\node[draw,circle,minimum size=0.6cm] at (3.5,4.5) {3};
\node[draw,circle,minimum size=0.6cm] at (3.5,5.5) {5};
\node[draw,circle,minimum size=0.6cm] at (4.5,1.5) {5};
\node[draw,circle,minimum size=0.6cm] at (4.5,3.5) {2};
\node[draw,circle,minimum size=0.6cm] at (4.5,5.5) {3};
\node[draw,circle,minimum size=0.6cm] at (5.5,4.5) {8};
\end{tikzpicture}
\hspace{1cm}
\begin{tikzpicture}
\draw[step=1cm,color={rgb:black,1;white,4}] (0,0) grid (6,6);

\draw[line width=0.6mm] (0,0) -- (0,6);
\draw[line width=0.6mm] (1,1) -- (1,2);
\draw[line width=0.6mm] (1,4) -- (1,5);
\draw[line width=0.6mm] (2,0) -- (2,1);
\draw[line width=0.6mm] (2,2) -- (2,6);
\draw[line width=0.6mm] (3,0) -- (3,5);
\draw[line width=0.6mm] (4,0) -- (4,3);
\draw[line width=0.6mm] (4,4) -- (4,5);
\draw[line width=0.6mm] (5,1) -- (5,6);
\draw[line width=0.6mm] (6,0) -- (6,6);

\draw[line width=0.6mm] (0,0) -- (6,0);
\draw[line width=0.6mm] (0,1) -- (3,1);
\draw[line width=0.6mm] (4,1) -- (6,1);
\draw[line width=0.6mm] (1,2) -- (6,2);
\draw[line width=0.6mm] (2,3) -- (4,3);
\draw[line width=0.6mm] (0,4) -- (1,4);
\draw[line width=0.6mm] (3,4) -- (6,4);
\draw[line width=0.6mm] (1,5) -- (5,5);
\draw[line width=0.6mm] (0,6) -- (6,6);

\draw[line width=0.6mm,color=red] (0.5,0.8) -- (0.5,3.5) -- (1.2,3.5);
\draw[line width=0.6mm,color=red] (0.5,4.8) -- (0.5,5.2);
\draw[line width=0.6mm,color=red] (1.5,1.2) -- (1.5,0.5) -- (5.5,0.5) -- (5.5,4.2);
\draw[line width=0.6mm,color=red] (1.8,2.5) -- (2.5,2.5) -- (2.5,1.5) -- (4.2,1.5);
\draw[line width=0.6mm,color=red] (3.8,2.5) -- (4.5,2.5) -- (4.5,3.2);
\draw[line width=0.6mm,color=red] (3.8,4.5) -- (4.5,4.5) -- (4.5,5.2);
\draw[line width=0.6mm,color=red] (3.2,3.5) -- (2.5,3.5) -- (2.5,4.5) -- (1.5,4.5) -- (1.5,5.5) -- (3.2,5.5);

\node[draw,circle,minimum size=0.6cm] at (0.5,0.5) {2};
\node[draw,circle,minimum size=0.6cm] at (0.5,4.5) {};
\node[draw,circle,minimum size=0.6cm] at (0.5,5.5) {};
\node[draw,circle,minimum size=0.6cm] at (1.5,1.5) {};
\node[draw,circle,minimum size=0.6cm] at (1.5,2.5) {5};
\node[draw,circle,minimum size=0.6cm] at (1.5,3.5) {2};
\node[draw,circle,minimum size=0.6cm] at (3.5,2.5) {2};
\node[draw,circle,minimum size=0.6cm] at (3.5,3.5) {};
\node[draw,circle,minimum size=0.6cm] at (3.5,4.5) {3};
\node[draw,circle,minimum size=0.6cm] at (3.5,5.5) {5};
\node[draw,circle,minimum size=0.6cm] at (4.5,1.5) {5};
\node[draw,circle,minimum size=0.6cm] at (4.5,3.5) {2};
\node[draw,circle,minimum size=0.6cm] at (4.5,5.5) {3};
\node[draw,circle,minimum size=0.6cm] at (5.5,4.5) {8};
\end{tikzpicture}
\caption{An example of a $6 \times 6$ Wataridori puzzle (left) and its solution (right)}
\label{fig1}
\end{figure}

\subsection{Related Work}
Many pencil puzzles have been proved NP-complete, where the decision problem asks whether a given instance has a valid solution. Examples include Chained Block \cite{chained}, Choco Banana \cite{choco}, Dosun-Fuwari \cite{dosun}, Fillmat \cite{fillmat}, Five Cells \cite{fivecells}, Goishi Hiroi \cite{goishi}, Hashiwokakero \cite{bridges}, Herugolf \cite{makaro}, Heyawake \cite{heyawake}, Juosan \cite{juosan}, Kakuro \cite{sudoku}, Kurodoko \cite{kurodoko}, Kurotto \cite{juosan}, LITS \cite{lits}, Makaro \cite{makaro}, Moon-or-Sun \cite{moon}, Nagareru \cite{moon}, Nondango \cite{nondango}, Nonogram \cite{nonogram}, Norinori \cite{lits}, Nurikabe \cite{nurikabe}, Nurimeizu \cite{moon}, Nurimisaki \cite{nurimisaki}, Nuritwin \cite{nuritwin}, Pencils \cite{pencils}, Ripple Effect \cite{ripple}, Roma \cite{roma}, Sashigane \cite{nurimisaki}, Shakashaka \cite{shakashaka}, Shikaku \cite{ripple}, Slitherlink \cite{sudoku}, Sto-Stone \cite{stone}, Sudoku \cite{sudoku}, Suguru \cite{suguru}, Sumplete \cite{sumplete}, Tatamibari \cite{tatamibari}, Tilepaint \cite{sudoku}, Toichika \cite{toichika}, Usowan \cite{usowan}, Yin-Yang \cite{yinyang}, Yosenabe \cite{yosenabe}, and Zeiger \cite{zeiger}.

The vast majority of NP-completeness proofs for pencil puzzles are obtained via reductions from SAT-type problems, such as 3SAT, 1-in-3 SAT, Planar SAT, and Circuit SAT. Some proofs instead use reductions from Hamiltonian cycle problems. In contrast, direct reductions from one pencil puzzle to another are rare, as the rules of different puzzles typically differ significantly, making such reductions difficult to construct.

\subsection{Our Contribution}
In this paper, we prove that deciding solvability of a given Wataridori puzzle is NP-complete via direct reductions from Numberlink, another pencil puzzle that has already been proved to be NP-complete.

\section{Numberlink}
\textit{Numberlink} (also known as \textit{Arukone}) is a well-known pencil puzzle. While it was popularized by Nikoli, its origins date back to at least 1932 \cite{janko2}. Despite its appearance being similar to Wataridori, its rules are significantly different.

The puzzle consists of a rectangular grid in which some cells contain numbers, with each number appearing exactly twice. The objective is to connect every pair of cells with the same number by a path that goes between horizontally or vertically adjacent cells. Paths cannot cross or share cells with one another. See Figure~\ref{fig3}. Note that, unlike Wataridori, the numbers in Numberlink have no quantitative meaning and merely serve as labels to distinguish pairs of cells (in some Numberlink mobile apps, letters or colors are used instead of numbers).

\begin{figure}
\centering
\begin{tikzpicture}
\draw[step=1cm] (0,0) grid (6,6);

\node at (0.5,2.5) {4};
\node at (0.5,3.5) {3};
\node at (0.5,4.5) {2};
\node at (0.5,5.5) {1};
\node at (1.5,1.5) {5};
\node at (3.5,4.5) {1};
\node at (4.5,1.5) {5};
\node at (4.5,2.5) {3};
\node at (4.5,5.5) {2};
\node at (5.5,5.5) {4};
\end{tikzpicture}
\hspace{1cm}
\begin{tikzpicture}
\draw[step=1cm] (0,0) grid (6,6);

\draw[line width=0.6mm,color=red] (0.8,5.5) -- (3.5,5.5) -- (3.5,4.8);
\draw[line width=0.6mm,color=red] (0.8,4.5) -- (2.5,4.5) -- (2.5,3.5) -- (4.5,3.5) -- (4.5,5.2);
\draw[line width=0.6mm,color=red] (0.8,3.5) -- (1.5,3.5) -- (1.5,2.5) -- (4.2,2.5);
\draw[line width=0.6mm,color=red] (0.5,2.2) -- (0.5,0.5) -- (5.5,0.5) -- (5.5,5.2);
\draw[line width=0.6mm,color=red] (1.8,1.5) -- (4.2,1.5);

\node at (0.5,2.5) {4};
\node at (0.5,3.5) {3};
\node at (0.5,4.5) {2};
\node at (0.5,5.5) {1};
\node at (1.5,1.5) {5};
\node at (3.5,4.5) {1};
\node at (4.5,1.5) {5};
\node at (4.5,2.5) {3};
\node at (4.5,5.5) {2};
\node at (5.5,5.5) {4};
\end{tikzpicture}
\caption{An example of a $6 \times 6$ Numberlink puzzle (left) and its solution (right)}
\label{fig3}
\end{figure}

There are two main variants of Numberlink: one that requires all cells in the grid to be covered by paths and one that does not. Both variants have been proved to be NP-complete \cite{numberlink,numberlink0}.

\section{Reduction from Numberlink to Wataridori}
As Wataridori clearly belongs to NP, the nontrivial part is to prove its NP-hardness. We do so by constructing a reduction from Numberlink to Wataridori. We consider the variant of Numberlink in which it is not required for all cells in the grid to be covered by paths.

\subsection{Constructing a Block Representing Each Cell}
Given an $m \times n$ Numberlink puzzle $G$, we transform it into a $(4k+5)m \times (4k+5)n$ Wataridori puzzle $H$, where $k$ will be defined later. For each cell of $G$, we construct a $(4k+5) \times (4k+5)$ block in $H$. A cell containing a number is represented by a \textit{number block}, while an empty cell is represented by an \textit{empty block}. Examples of a number block and an empty block for the case $k=2$ are shown in Figures~\ref{figA} and \ref{figB}.

The blocks are placed in $H$ in the same arrangement as the corresponding cells in $G$, resulting in a $(4k+5)m \times (4k+5)n$ Wataridori puzzle.

\subsection{Idea of Reduction}
Suppose $G$ contains $p$ pairs of numbers. We construct $H$ so that the $p$ paths in a solution of $G$ correspond to $p$ paths in $H$ that pass through different numbers of regions. This property allows us to label the $p$ pairs of endpoints in $H$ with distinct numbers, thereby forcing the pairings in $H$ to match exactly those in $G$.

Consider the main path in Figure~\ref{figA} (the one connecting to the center circle labeled $x$). There are at least three ways for the path to perform a ``zig-zag'' before exiting the block on the right side, as shown in Figure~\ref{figC}. In these three cases, the path passes through six, eight, and ten distinct regions, respectively. The rightmost region is the region in which the path will remain while traversing empty blocks, until it enters the number block at the other endpoint. This region is also the same as the outermost region of the number block at that endpoint.

For general $k$, there are at least $k+1$ different ways for the path to perform a ``zig-zag'' before exiting the block on each side. In these cases, the path passes through $2k+2, 2k+4, 2k+6, \ldots, 4k+2$ different regions, respectively. 

For each solution path in $G$, its corresponding path in $H$ can perform a ``zig-zag'' in the number blocks at both endpoints. Hence, the total number of regions it passes through can be $4k+3, 4k+5, 4k+7, \ldots, 8k+3$, giving $2k+1$ possible values.

Finally, we choose $k = \lceil (p-1)/2 \rceil$, so that $2k+1 \ge p$. Therefore, we can assign $p$ distinct numbers from the set $\{4k+3, 4k+5, 4k+7, \ldots, 8k+3\}$ to the center circles of the $p$ pairs of number blocks in $H$ corresponding to the $p$ pairs of numbered cells in $G$. Specifically, if the numbers in $G$ are $1,2,\ldots,p$, then we assign the number $4k+2i+1$ to the center circle of a number block in $H$ corresponding to a cell labeled $i$ in $G$ ($i=1,2,\ldots,p$). This forces the pairings in $H$ to be exactly the same as in $G$.

Since $2p \le mn$ (the total number of cells in $G$), we have $k \le p/2 \le mn$, so the size of $H$ is polynomial in $m$ and $n$. Therefore, the transformation can be performed in polynomial time.

\begin{figure}
\centering
\begin{tikzpicture}
\draw[step=1cm] (0,0) grid (1,1);
\node at (0.5,0.5) {$x$};
\draw[line width=0.6mm,color=red] (0.65,0.5) -- (1,0.5);
\end{tikzpicture}
\vspace{3cm}

\begin{tikzpicture}
\draw[step=1cm,color={rgb:black,1;white,4}] (0,0) grid (13,13);
\draw[step=1cm,line width=0.6mm] (1,6) grid (5,8);
\draw[step=1cm,line width=0.6mm] (8,5) grid (12,7);
\draw[step=1cm,line width=0.6mm] (5,1) grid (7,5);
\draw[step=1cm,line width=0.6mm] (6,8) grid (8,12);

\draw[line width=0.6mm] (0,0) -- (0,6);
\draw[line width=0.6mm] (0,7) -- (0,13);
\draw[line width=0.6mm] (6,0) -- (6,6);
\draw[line width=0.6mm] (6,7) -- (6,13);
\draw[line width=0.6mm] (7,0) -- (7,6);
\draw[line width=0.6mm] (7,7) -- (7,13);
\draw[line width=0.6mm] (13,0) -- (13,6);
\draw[line width=0.6mm] (13,7) -- (13,13);

\draw[line width=0.6mm] (0,0) -- (6,0);
\draw[line width=0.6mm] (7,0) -- (13,0);
\draw[line width=0.6mm] (0,6) -- (6,6);
\draw[line width=0.6mm] (7,6) -- (13,6);
\draw[line width=0.6mm] (0,7) -- (6,7);
\draw[line width=0.6mm] (7,7) -- (13,7);
\draw[line width=0.6mm] (0,13) -- (6,13);
\draw[line width=0.6mm] (7,13) -- (13,13);

\draw[line width=0.6mm,color=red] (0.5,1.8) -- (0.5,2.2);
\draw[line width=0.6mm,color=red] (0.5,3.8) -- (0.5,4.2);
\draw[line width=0.6mm,color=red] (0.5,7.8) -- (0.5,8.2);
\draw[line width=0.6mm,color=red] (0.5,9.8) -- (0.5,10.2);
\draw[line width=0.6mm,color=red] (0.5,11.8) -- (0.5,12.2);
\draw[line width=0.6mm,color=red] (4.5,1.8) -- (4.5,2.2);
\draw[line width=0.6mm,color=red] (4.5,3.8) -- (4.5,4.2);
\draw[line width=0.6mm,color=red] (5.5,7.8) -- (5.5,8.2);
\draw[line width=0.6mm,color=red] (5.5,9.8) -- (5.5,10.2);
\draw[line width=0.6mm,color=red] (5.5,11.8) -- (5.5,12.2);
\draw[line width=0.6mm,color=red] (7.5,0.8) -- (7.5,1.2);
\draw[line width=0.6mm,color=red] (7.5,2.8) -- (7.5,3.2);
\draw[line width=0.6mm,color=red] (7.5,4.8) -- (7.5,5.2);
\draw[line width=0.6mm,color=red] (8.5,8.8) -- (8.5,9.2);
\draw[line width=0.6mm,color=red] (8.5,10.8) -- (8.5,11.2);
\draw[line width=0.6mm,color=red] (12.5,0.8) -- (12.5,1.2);
\draw[line width=0.6mm,color=red] (12.5,2.8) -- (12.5,3.2);
\draw[line width=0.6mm,color=red] (12.5,4.8) -- (12.5,5.2);
\draw[line width=0.6mm,color=red] (12.5,8.8) -- (12.5,9.2);
\draw[line width=0.6mm,color=red] (12.5,10.8) -- (12.5,11.2);

\draw[line width=0.6mm,color=red] (0.8,0.5) -- (1.2,0.5);
\draw[line width=0.6mm,color=red] (2.8,0.5) -- (3.2,0.5);
\draw[line width=0.6mm,color=red] (4.8,0.5) -- (5.2,0.5);
\draw[line width=0.6mm,color=red] (8.8,0.5) -- (9.2,0.5);
\draw[line width=0.6mm,color=red] (10.8,0.5) -- (11.2,0.5);
\draw[line width=0.6mm,color=red] (0.8,5.5) -- (1.2,5.5);
\draw[line width=0.6mm,color=red] (2.8,5.5) -- (3.2,5.5);
\draw[line width=0.6mm,color=red] (4.8,5.5) -- (5.2,5.5);
\draw[line width=0.6mm,color=red] (8.8,4.5) -- (9.2,4.5);
\draw[line width=0.6mm,color=red] (10.8,4.5) -- (11.2,4.5);
\draw[line width=0.6mm,color=red] (1.8,8.5) -- (2.2,8.5);
\draw[line width=0.6mm,color=red] (3.8,8.5) -- (4.2,8.5);
\draw[line width=0.6mm,color=red] (7.8,7.5) -- (8.2,7.5);
\draw[line width=0.6mm,color=red] (9.8,7.5) -- (10.2,7.5);
\draw[line width=0.6mm,color=red] (11.8,7.5) -- (12.2,7.5);
\draw[line width=0.6mm,color=red] (1.8,12.5) -- (2.2,12.5);
\draw[line width=0.6mm,color=red] (3.8,12.5) -- (4.2,12.5);
\draw[line width=0.6mm,color=red] (7.8,12.5) -- (8.2,12.5);
\draw[line width=0.6mm,color=red] (9.8,12.5) -- (10.2,12.5);
\draw[line width=0.6mm,color=red] (11.8,12.5) -- (12.2,12.5);

\draw[line width=0.6mm,color=red] (6.8,6.5) -- (13,6.5);

\node[draw,circle,minimum size=0.6cm] at (0.5,0.5) {1};
\node[draw,circle,minimum size=0.6cm] at (0.5,1.5) {1};
\node[draw,circle,minimum size=0.6cm] at (0.5,2.5) {1};
\node[draw,circle,minimum size=0.6cm] at (0.5,3.5) {1};
\node[draw,circle,minimum size=0.6cm] at (0.5,4.5) {1};
\node[draw,circle,minimum size=0.6cm] at (0.5,5.5) {1};
\node[draw,circle,minimum size=0.6cm] at (0.5,7.5) {1};
\node[draw,circle,minimum size=0.6cm] at (0.5,8.5) {1};
\node[draw,circle,minimum size=0.6cm] at (0.5,9.5) {1};
\node[draw,circle,minimum size=0.6cm] at (0.5,10.5) {1};
\node[draw,circle,minimum size=0.6cm] at (0.5,11.5) {1};
\node[draw,circle,minimum size=0.6cm] at (0.5,12.5) {1};
\node[draw,circle,minimum size=0.6cm] at (5.5,0.5) {1};
\node[draw,circle,minimum size=0.6cm] at (4.5,1.5) {1};
\node[draw,circle,minimum size=0.6cm] at (4.5,2.5) {1};
\node[draw,circle,minimum size=0.6cm] at (4.5,3.5) {1};
\node[draw,circle,minimum size=0.6cm] at (4.5,4.5) {1};
\node[draw,circle,minimum size=0.6cm] at (5.5,5.5) {1};
\node[draw,circle,minimum size=0.6cm] at (5.5,7.5) {1};
\node[draw,circle,minimum size=0.6cm] at (5.5,8.5) {1};
\node[draw,circle,minimum size=0.6cm] at (5.5,9.5) {1};
\node[draw,circle,minimum size=0.6cm] at (5.5,10.5) {1};
\node[draw,circle,minimum size=0.6cm] at (5.5,11.5) {1};
\node[draw,circle,minimum size=0.6cm] at (5.5,12.5) {1};
\node[draw,circle,minimum size=0.6cm] at (7.5,0.5) {1};
\node[draw,circle,minimum size=0.6cm] at (7.5,1.5) {1};
\node[draw,circle,minimum size=0.6cm] at (7.5,2.5) {1};
\node[draw,circle,minimum size=0.6cm] at (7.5,3.5) {1};
\node[draw,circle,minimum size=0.6cm] at (7.5,4.5) {1};
\node[draw,circle,minimum size=0.6cm] at (7.5,5.5) {1};
\node[draw,circle,minimum size=0.6cm] at (7.5,7.5) {1};
\node[draw,circle,minimum size=0.6cm] at (8.5,8.5) {1};
\node[draw,circle,minimum size=0.6cm] at (8.5,9.5) {1};
\node[draw,circle,minimum size=0.6cm] at (8.5,10.5) {1};
\node[draw,circle,minimum size=0.6cm] at (8.5,11.5) {1};
\node[draw,circle,minimum size=0.6cm] at (7.5,12.5) {1};
\node[draw,circle,minimum size=0.6cm] at (12.5,0.5) {1};
\node[draw,circle,minimum size=0.6cm] at (12.5,1.5) {1};
\node[draw,circle,minimum size=0.6cm] at (12.5,2.5) {1};
\node[draw,circle,minimum size=0.6cm] at (12.5,3.5) {1};
\node[draw,circle,minimum size=0.6cm] at (12.5,4.5) {1};
\node[draw,circle,minimum size=0.6cm] at (12.5,5.5) {1};
\node[draw,circle,minimum size=0.6cm] at (12.5,7.5) {1};
\node[draw,circle,minimum size=0.6cm] at (12.5,8.5) {1};
\node[draw,circle,minimum size=0.6cm] at (12.5,9.5) {1};
\node[draw,circle,minimum size=0.6cm] at (12.5,10.5) {1};
\node[draw,circle,minimum size=0.6cm] at (12.5,11.5) {1};
\node[draw,circle,minimum size=0.6cm] at (12.5,12.5) {1};

\node[draw,circle,minimum size=0.6cm] at (1.5,0.5) {1};
\node[draw,circle,minimum size=0.6cm] at (2.5,0.5) {1};
\node[draw,circle,minimum size=0.6cm] at (3.5,0.5) {1};
\node[draw,circle,minimum size=0.6cm] at (4.5,0.5) {1};
\node[draw,circle,minimum size=0.6cm] at (8.5,0.5) {1};
\node[draw,circle,minimum size=0.6cm] at (9.5,0.5) {1};
\node[draw,circle,minimum size=0.6cm] at (10.5,0.5) {1};
\node[draw,circle,minimum size=0.6cm] at (11.5,0.5) {1};
\node[draw,circle,minimum size=0.6cm] at (1.5,5.5) {1};
\node[draw,circle,minimum size=0.6cm] at (2.5,5.5) {1};
\node[draw,circle,minimum size=0.6cm] at (3.5,5.5) {1};
\node[draw,circle,minimum size=0.6cm] at (4.5,5.5) {1};
\node[draw,circle,minimum size=0.6cm] at (8.5,4.5) {1};
\node[draw,circle,minimum size=0.6cm] at (9.5,4.5) {1};
\node[draw,circle,minimum size=0.6cm] at (10.5,4.5) {1};
\node[draw,circle,minimum size=0.6cm] at (11.5,4.5) {1};
\node[draw,circle,minimum size=0.6cm] at (1.5,8.5) {1};
\node[draw,circle,minimum size=0.6cm] at (2.5,8.5) {1};
\node[draw,circle,minimum size=0.6cm] at (3.5,8.5) {1};
\node[draw,circle,minimum size=0.6cm] at (4.5,8.5) {1};
\node[draw,circle,minimum size=0.6cm] at (8.5,7.5) {1};
\node[draw,circle,minimum size=0.6cm] at (9.5,7.5) {1};
\node[draw,circle,minimum size=0.6cm] at (10.5,7.5) {1};
\node[draw,circle,minimum size=0.6cm] at (11.5,7.5) {1};
\node[draw,circle,minimum size=0.6cm] at (1.5,12.5) {1};
\node[draw,circle,minimum size=0.6cm] at (2.5,12.5) {1};
\node[draw,circle,minimum size=0.6cm] at (3.5,12.5) {1};
\node[draw,circle,minimum size=0.6cm] at (4.5,12.5) {1};
\node[draw,circle,minimum size=0.6cm] at (8.5,12.5) {1};
\node[draw,circle,minimum size=0.6cm] at (9.5,12.5) {1};
\node[draw,circle,minimum size=0.6cm] at (10.5,12.5) {1};
\node[draw,circle,minimum size=0.6cm] at (11.5,12.5) {1};

\node[draw,circle,minimum size=0.6cm] at (6.5,6.5) {$x$};
\end{tikzpicture}
\caption{A $13 \times 13$ number block in the Wataridori puzzle $H$ for $k=2$ (bottom), representing a cell labeled $x$ in the Numberlink puzzle $G$ (top). Red lines showing one possible solution corresponding to a path going to the right of the cell in $G$.}
\label{figA}
\end{figure}

\begin{figure}
\centering
\begin{tikzpicture}
\draw[step=1cm] (0,0) grid (1,1);
\draw[line width=0.6mm,color=red] (0,0.5) -- (0.5,0.5) -- (0.5,1);
\end{tikzpicture}
\vspace{3cm}

\begin{tikzpicture}
\draw[step=1cm,color={rgb:black,1;white,4}] (0,0) grid (13,13);

\draw[line width=0.6mm] (0,0) -- (0,6);
\draw[line width=0.6mm] (0,7) -- (0,13);
\draw[line width=0.6mm] (6,0) -- (6,6);
\draw[line width=0.6mm] (6,7) -- (6,13);
\draw[line width=0.6mm] (7,0) -- (7,6);
\draw[line width=0.6mm] (7,7) -- (7,13);
\draw[line width=0.6mm] (13,0) -- (13,6);
\draw[line width=0.6mm] (13,7) -- (13,13);

\draw[line width=0.6mm] (0,0) -- (6,0);
\draw[line width=0.6mm] (7,0) -- (13,0);
\draw[line width=0.6mm] (0,6) -- (6,6);
\draw[line width=0.6mm] (7,6) -- (13,6);
\draw[line width=0.6mm] (0,7) -- (6,7);
\draw[line width=0.6mm] (7,7) -- (13,7);
\draw[line width=0.6mm] (0,13) -- (6,13);
\draw[line width=0.6mm] (7,13) -- (13,13);

\draw[line width=0.6mm,color=red] (0.5,1.8) -- (0.5,2.2);
\draw[line width=0.6mm,color=red] (0.5,3.8) -- (0.5,4.2);
\draw[line width=0.6mm,color=red] (0.5,7.8) -- (0.5,8.2);
\draw[line width=0.6mm,color=red] (0.5,9.8) -- (0.5,10.2);
\draw[line width=0.6mm,color=red] (0.5,11.8) -- (0.5,12.2);
\draw[line width=0.6mm,color=red] (5.5,1.8) -- (5.5,2.2);
\draw[line width=0.6mm,color=red] (5.5,3.8) -- (5.5,4.2);
\draw[line width=0.6mm,color=red] (5.5,7.8) -- (5.5,8.2);
\draw[line width=0.6mm,color=red] (5.5,9.8) -- (5.5,10.2);
\draw[line width=0.6mm,color=red] (5.5,11.8) -- (5.5,12.2);
\draw[line width=0.6mm,color=red] (7.5,0.8) -- (7.5,1.2);
\draw[line width=0.6mm,color=red] (7.5,2.8) -- (7.5,3.2);
\draw[line width=0.6mm,color=red] (7.5,4.8) -- (7.5,5.2);
\draw[line width=0.6mm,color=red] (7.5,8.8) -- (7.5,9.2);
\draw[line width=0.6mm,color=red] (7.5,10.8) -- (7.5,11.2);
\draw[line width=0.6mm,color=red] (12.5,0.8) -- (12.5,1.2);
\draw[line width=0.6mm,color=red] (12.5,2.8) -- (12.5,3.2);
\draw[line width=0.6mm,color=red] (12.5,4.8) -- (12.5,5.2);
\draw[line width=0.6mm,color=red] (12.5,8.8) -- (12.5,9.2);
\draw[line width=0.6mm,color=red] (12.5,10.8) -- (12.5,11.2);

\draw[line width=0.6mm,color=red] (0.8,0.5) -- (1.2,0.5);
\draw[line width=0.6mm,color=red] (2.8,0.5) -- (3.2,0.5);
\draw[line width=0.6mm,color=red] (4.8,0.5) -- (5.2,0.5);
\draw[line width=0.6mm,color=red] (8.8,0.5) -- (9.2,0.5);
\draw[line width=0.6mm,color=red] (10.8,0.5) -- (11.2,0.5);
\draw[line width=0.6mm,color=red] (0.8,5.5) -- (1.2,5.5);
\draw[line width=0.6mm,color=red] (2.8,5.5) -- (3.2,5.5);
\draw[line width=0.6mm,color=red] (4.8,5.5) -- (5.2,5.5);
\draw[line width=0.6mm,color=red] (8.8,5.5) -- (9.2,5.5);
\draw[line width=0.6mm,color=red] (10.8,5.5) -- (11.2,5.5);
\draw[line width=0.6mm,color=red] (1.8,7.5) -- (2.2,7.5);
\draw[line width=0.6mm,color=red] (3.8,7.5) -- (4.2,7.5);
\draw[line width=0.6mm,color=red] (7.8,7.5) -- (8.2,7.5);
\draw[line width=0.6mm,color=red] (9.8,7.5) -- (10.2,7.5);
\draw[line width=0.6mm,color=red] (11.8,7.5) -- (12.2,7.5);
\draw[line width=0.6mm,color=red] (1.8,12.5) -- (2.2,12.5);
\draw[line width=0.6mm,color=red] (3.8,12.5) -- (4.2,12.5);
\draw[line width=0.6mm,color=red] (7.8,12.5) -- (8.2,12.5);
\draw[line width=0.6mm,color=red] (9.8,12.5) -- (10.2,12.5);
\draw[line width=0.6mm,color=red] (11.8,12.5) -- (12.2,12.5);

\draw[line width=0.6mm,color=red] (0,6.5) -- (6.5,6.5) -- (6.5,13);

\node[draw,circle,minimum size=0.6cm] at (0.5,0.5) {1};
\node[draw,circle,minimum size=0.6cm] at (0.5,1.5) {1};
\node[draw,circle,minimum size=0.6cm] at (0.5,2.5) {1};
\node[draw,circle,minimum size=0.6cm] at (0.5,3.5) {1};
\node[draw,circle,minimum size=0.6cm] at (0.5,4.5) {1};
\node[draw,circle,minimum size=0.6cm] at (0.5,5.5) {1};
\node[draw,circle,minimum size=0.6cm] at (0.5,7.5) {1};
\node[draw,circle,minimum size=0.6cm] at (0.5,8.5) {1};
\node[draw,circle,minimum size=0.6cm] at (0.5,9.5) {1};
\node[draw,circle,minimum size=0.6cm] at (0.5,10.5) {1};
\node[draw,circle,minimum size=0.6cm] at (0.5,11.5) {1};
\node[draw,circle,minimum size=0.6cm] at (0.5,12.5) {1};
\node[draw,circle,minimum size=0.6cm] at (5.5,0.5) {1};
\node[draw,circle,minimum size=0.6cm] at (5.5,1.5) {1};
\node[draw,circle,minimum size=0.6cm] at (5.5,2.5) {1};
\node[draw,circle,minimum size=0.6cm] at (5.5,3.5) {1};
\node[draw,circle,minimum size=0.6cm] at (5.5,4.5) {1};
\node[draw,circle,minimum size=0.6cm] at (5.5,5.5) {1};
\node[draw,circle,minimum size=0.6cm] at (5.5,7.5) {1};
\node[draw,circle,minimum size=0.6cm] at (5.5,8.5) {1};
\node[draw,circle,minimum size=0.6cm] at (5.5,9.5) {1};
\node[draw,circle,minimum size=0.6cm] at (5.5,10.5) {1};
\node[draw,circle,minimum size=0.6cm] at (5.5,11.5) {1};
\node[draw,circle,minimum size=0.6cm] at (5.5,12.5) {1};
\node[draw,circle,minimum size=0.6cm] at (7.5,0.5) {1};
\node[draw,circle,minimum size=0.6cm] at (7.5,1.5) {1};
\node[draw,circle,minimum size=0.6cm] at (7.5,2.5) {1};
\node[draw,circle,minimum size=0.6cm] at (7.5,3.5) {1};
\node[draw,circle,minimum size=0.6cm] at (7.5,4.5) {1};
\node[draw,circle,minimum size=0.6cm] at (7.5,5.5) {1};
\node[draw,circle,minimum size=0.6cm] at (7.5,7.5) {1};
\node[draw,circle,minimum size=0.6cm] at (7.5,8.5) {1};
\node[draw,circle,minimum size=0.6cm] at (7.5,9.5) {1};
\node[draw,circle,minimum size=0.6cm] at (7.5,10.5) {1};
\node[draw,circle,minimum size=0.6cm] at (7.5,11.5) {1};
\node[draw,circle,minimum size=0.6cm] at (7.5,12.5) {1};
\node[draw,circle,minimum size=0.6cm] at (12.5,0.5) {1};
\node[draw,circle,minimum size=0.6cm] at (12.5,1.5) {1};
\node[draw,circle,minimum size=0.6cm] at (12.5,2.5) {1};
\node[draw,circle,minimum size=0.6cm] at (12.5,3.5) {1};
\node[draw,circle,minimum size=0.6cm] at (12.5,4.5) {1};
\node[draw,circle,minimum size=0.6cm] at (12.5,5.5) {1};
\node[draw,circle,minimum size=0.6cm] at (12.5,7.5) {1};
\node[draw,circle,minimum size=0.6cm] at (12.5,8.5) {1};
\node[draw,circle,minimum size=0.6cm] at (12.5,9.5) {1};
\node[draw,circle,minimum size=0.6cm] at (12.5,10.5) {1};
\node[draw,circle,minimum size=0.6cm] at (12.5,11.5) {1};
\node[draw,circle,minimum size=0.6cm] at (12.5,12.5) {1};

\node[draw,circle,minimum size=0.6cm] at (1.5,0.5) {1};
\node[draw,circle,minimum size=0.6cm] at (2.5,0.5) {1};
\node[draw,circle,minimum size=0.6cm] at (3.5,0.5) {1};
\node[draw,circle,minimum size=0.6cm] at (4.5,0.5) {1};
\node[draw,circle,minimum size=0.6cm] at (8.5,0.5) {1};
\node[draw,circle,minimum size=0.6cm] at (9.5,0.5) {1};
\node[draw,circle,minimum size=0.6cm] at (10.5,0.5) {1};
\node[draw,circle,minimum size=0.6cm] at (11.5,0.5) {1};
\node[draw,circle,minimum size=0.6cm] at (1.5,5.5) {1};
\node[draw,circle,minimum size=0.6cm] at (2.5,5.5) {1};
\node[draw,circle,minimum size=0.6cm] at (3.5,5.5) {1};
\node[draw,circle,minimum size=0.6cm] at (4.5,5.5) {1};
\node[draw,circle,minimum size=0.6cm] at (8.5,5.5) {1};
\node[draw,circle,minimum size=0.6cm] at (9.5,5.5) {1};
\node[draw,circle,minimum size=0.6cm] at (10.5,5.5) {1};
\node[draw,circle,minimum size=0.6cm] at (11.5,5.5) {1};
\node[draw,circle,minimum size=0.6cm] at (1.5,7.5) {1};
\node[draw,circle,minimum size=0.6cm] at (2.5,7.5) {1};
\node[draw,circle,minimum size=0.6cm] at (3.5,7.5) {1};
\node[draw,circle,minimum size=0.6cm] at (4.5,7.5) {1};
\node[draw,circle,minimum size=0.6cm] at (8.5,7.5) {1};
\node[draw,circle,minimum size=0.6cm] at (9.5,7.5) {1};
\node[draw,circle,minimum size=0.6cm] at (10.5,7.5) {1};
\node[draw,circle,minimum size=0.6cm] at (11.5,7.5) {1};
\node[draw,circle,minimum size=0.6cm] at (1.5,12.5) {1};
\node[draw,circle,minimum size=0.6cm] at (2.5,12.5) {1};
\node[draw,circle,minimum size=0.6cm] at (3.5,12.5) {1};
\node[draw,circle,minimum size=0.6cm] at (4.5,12.5) {1};
\node[draw,circle,minimum size=0.6cm] at (8.5,12.5) {1};
\node[draw,circle,minimum size=0.6cm] at (9.5,12.5) {1};
\node[draw,circle,minimum size=0.6cm] at (10.5,12.5) {1};
\node[draw,circle,minimum size=0.6cm] at (11.5,12.5) {1};
\end{tikzpicture}
\caption{A $13 \times 13$ empty block in the Wataridori puzzle $H$ for $k=2$ (bottom), representing an empty cell in the Numberlink puzzle $G$ (top). Red lines showing one possible solution corresponding to a path entering from the left, turning left, and exiting at the top of the cell in $G$.}
\label{figB}
\end{figure}

\begin{figure}
\centering
\begin{tikzpicture}
\draw[step=1cm,color={rgb:black,1;white,4}] (0,0) grid (7,2);

\draw[line width=0.6mm] (0,0) -- (0,1);
\draw[line width=0.6mm] (1,0) -- (1,1);
\draw[line width=0.6mm] (2,0) -- (2,2);
\draw[line width=0.6mm] (3,0) -- (3,2);
\draw[line width=0.6mm] (4,0) -- (4,2);
\draw[line width=0.6mm] (5,0) -- (5,2);
\draw[line width=0.6mm] (6,0) -- (6,2);
\draw[line width=0.6mm] (7,0) -- (7,1);

\draw[line width=0.6mm] (0,0) -- (1,0);
\draw[line width=0.6mm] (2,0) -- (6,0);
\draw[line width=0.6mm] (1,1) -- (7,1);
\draw[line width=0.6mm] (1,2) -- (7,2);

\draw[line width=0.6mm,color=red] (0.8,1.5) -- (7,1.5);

\node[draw,circle,minimum size=0.6cm] at (0.5,1.5) {$x$};
\node[draw,circle,minimum size=0.6cm] at (1.5,0.5) {1};
\node[draw,circle,minimum size=0.6cm] at (6.5,0.5) {1};
\end{tikzpicture}

\vspace{1cm}

\begin{tikzpicture}
\draw[step=1cm,color={rgb:black,1;white,4}] (0,0) grid (7,2);

\draw[line width=0.6mm] (0,0) -- (0,1);
\draw[line width=0.6mm] (1,0) -- (1,1);
\draw[line width=0.6mm] (2,0) -- (2,2);
\draw[line width=0.6mm] (3,0) -- (3,2);
\draw[line width=0.6mm] (4,0) -- (4,2);
\draw[line width=0.6mm] (5,0) -- (5,2);
\draw[line width=0.6mm] (6,0) -- (6,2);
\draw[line width=0.6mm] (7,0) -- (7,1);

\draw[line width=0.6mm] (0,0) -- (1,0);
\draw[line width=0.6mm] (2,0) -- (6,0);
\draw[line width=0.6mm] (1,1) -- (7,1);
\draw[line width=0.6mm] (1,2) -- (7,2);

\draw[line width=0.6mm,color=red] (0.8,1.5) -- (2.5,1.5) -- (2.5,0.5) -- (3.5,0.5) -- (3.5,1.5) -- (7,1.5);

\node[draw,circle,minimum size=0.6cm] at (0.5,1.5) {$x$};
\node[draw,circle,minimum size=0.6cm] at (1.5,0.5) {1};
\node[draw,circle,minimum size=0.6cm] at (6.5,0.5) {1};
\end{tikzpicture}

\vspace{1cm}

\begin{tikzpicture}
\draw[step=1cm,color={rgb:black,1;white,4}] (0,0) grid (7,2);

\draw[line width=0.6mm] (0,0) -- (0,1);
\draw[line width=0.6mm] (1,0) -- (1,1);
\draw[line width=0.6mm] (2,0) -- (2,2);
\draw[line width=0.6mm] (3,0) -- (3,2);
\draw[line width=0.6mm] (4,0) -- (4,2);
\draw[line width=0.6mm] (5,0) -- (5,2);
\draw[line width=0.6mm] (6,0) -- (6,2);
\draw[line width=0.6mm] (7,0) -- (7,1);

\draw[line width=0.6mm] (0,0) -- (1,0);
\draw[line width=0.6mm] (2,0) -- (6,0);
\draw[line width=0.6mm] (1,1) -- (7,1);
\draw[line width=0.6mm] (1,2) -- (7,2);

\draw[line width=0.6mm,color=red] (0.8,1.5) -- (2.5,1.5) -- (2.5,0.5) -- (3.5,0.5) -- (3.5,1.5) -- (4.5,1.5) -- (4.5,0.5) -- (5.5,0.5) -- (5.5,1.5) -- (7,1.5);

\node[draw,circle,minimum size=0.6cm] at (0.5,1.5) {$x$};
\node[draw,circle,minimum size=0.6cm] at (1.5,0.5) {1};
\node[draw,circle,minimum size=0.6cm] at (6.5,0.5) {1};
\end{tikzpicture}
\caption{Three ways for the path to exit the block on the right side, passing through six, eight, and ten distinct regions, respectively}
\label{figC}
\end{figure}

\subsection{Proof of Correctness}
\begin{theorem}
$G$ has a solution if and only if $H$ has a solution.
\end{theorem}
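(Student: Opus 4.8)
The plan is to prove both directions of the equivalence by exhibiting, for each solution of one puzzle, a corresponding solution of the other.

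\textbf{Forward direction ($G$ solvable $\Rightarrow$ $H$ solvable).} Given a solution of $G$, each of the $p$ paths connects a pair of numbered cells. First I would describe how each Numberlink path translates into a Wataridori path: as the path enters, traverses, or turns within a cell of $G$, the corresponding block in $H$ is traversed in the matching way (straight through, turning, entering-and-terminating at a number block). The figures (Figures \ref{figA} and \ref{figB}) already give the canonical ``solved'' patterns for a number block traversed in one direction and an empty block traversed with a turn; I would simply note that by the symmetry of the blocks (they are built from four identical arms around a center), every direction of entry/exit and every turn has an analogous pattern, and the small border circles all get paired off locally by short edges regardless of which global path passes through. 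Then I would count regions: by the analysis in Section 3.3, a path leaving a number block through a chosen side passes through between $2k+2$ and $4k+2$ regions depending on its zig-zag, and remains inside a single region while crossing all empty blocks; so a full path through two number blocks can be made to pass through exactly $4k+2i+1$ regions when we want it to represent the $i$-th pair. Hence we choose each path's zig-zags at its two endpoint blocks so that the total region count equals the label $4k+2i+1$ assigned to that pair's center circles, and every number constraint in $H$ is satisfied. Paths in $H$ inherit non-crossing from the non-crossing paths in $G$ together with the fact that within each block the local patterns for the (at most one) passing path occupy disjoint cells from the border pairings.

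\textbf{Reverse direction ($H$ solvable $\Rightarrow$ $G$ solvable).} This is the more delicate direction and where I expect the main obstacle. Given a solution of $H$, I would first argue that the border circles force a rigid local structure: the way circles are placed along block boundaries (in pairs straddling region walls) essentially forces each ``port'' of a block to carry a path segment in the intended way, and forces at most one ``main'' path to pass through the interior of each block. I would then argue that the $p$ center circles of the number blocks, carrying the distinct labels $4k+3, 4k+5, \dots$, must be paired among themselves, and because the only region counts achievable by a path running between two number blocks lie in $\{4k+3, 4k+5, \dots, 8k+3\}$ with the label forcing the exact count, the pairing of center circles in $H$ induces a valid pairing of same-numbered cells in $G$: a center circle labeled $4k+2i+1$ can only be matched to another center circle whose block represents a cell with number $i$ (since there are exactly two such cells and their blocks are the only ones whose center circle could legitimately bear that label). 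Reading off which block-to-block route each path takes, and projecting to $G$, yields $p$ non-crossing paths connecting the correct pairs.

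\textbf{Main obstacle and how to handle it.} The crux is the rigidity claim in the reverse direction: I must rule out ``cheating'' solutions of $H$ where a path wanders between blocks in an unintended way, uses wildcard-free border circles creatively, or where the region count is hit by some route that does not project to a legal $G$-path. I would handle this by a careful case analysis of how a path can cross a block boundary given the forced border-circle pairings — showing that the walls between regions (placed exactly at the $6$th/$7$th rows and columns, so each block's interior is partitioned into four sub-blocks plus a central crossing region) together with the circle placements admit only the traversal types depicted in Figures \ref{figA}, \ref{figB}, and the zig-zags of Figure \ref{figC}. In particular I would verify that a path entering a number block must reach the center circle (it cannot pass straight through, since one endpoint is the center), that the achievable region counts within one block are exactly the odd-step arithmetic progression claimed, and that no path can connect a center circle to a border region count outside the stated set — so the distinct labels genuinely enforce the $G$-pairing. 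The border-circle bookkeeping (every one of the many ``$1$''-labeled circles gets validly paired with the right region count of $1$, i.e. via a short edge inside a single region) should be routine once the traversal types are pinned down, so I would state it briefly rather than enumerate every circle.
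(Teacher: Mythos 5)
Your proposal follows essentially the same route as the paper: in the forward direction, translate each Numberlink path block-by-block into a Wataridori path and tune the zig-zags at the two endpoint number blocks so the region count is exactly $4k+2i+1$; in the reverse direction, use the distinct labels on the center circles to force the pairing and project the paths back to $G$. If anything, you are more explicit than the paper about the rigidity/case analysis needed to rule out unintended traversals in the reverse direction, which the paper dispatches in two sentences.
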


\begin{proof}
Suppose $G$ has a solution $S$, with paths $s_1,s_2,\ldots,s_p$ connecting the pairs of numbers $1,2,\ldots,p$, respectively. For each $i=1,2,\ldots,p$, we draw a path in $H$ connecting the pair of circles labeled $4k+2i+1$, following the corresponding path in $S$ and performing exactly $i-1$ ``zig-zags'' in the number blocks at the two endpoints combined. 

The direct path with no ``zig-zag'' passes through $4k+3$ regions, and each additional ``zig-zag'' increases the number of regions by two. Hence, the resulting path passes through exactly $4k+2i+1$ regions, satisfying the puzzle constraint. Therefore, $H$ has a solution.

Conversely, suppose $H$ has a solution $T$. In each number block, there must be a path connecting the center circle that exits the block through one of its four sides. In each empty block, at most one path can enter and exit through two of the four sides (or no path passes through the block). Consequently, the $p$ paths in $T$ induce $p$ paths in $G$ connecting the pairs of numbers $1,2,\ldots,p$. Therefore, $G$ has a solution.
\end{proof}

Hence, Wataridori is NP-complete.

\section{Future Work}
We proved the NP-completeness of Wataridori via a reduction from another puzzle, Numberlink. Although the reduction is intuitive, it is nontrivial. Future work includes applying similar techniques to prove the NP-completeness of other puzzles via direct reductions from puzzles already known to be NP-complete. 

In particular, one possible candidate is \textit{Mintonette}, a puzzle that also involves drawing paths to connect circles in a rectangular grid into pairs under different constraints. It would be interesting to investigate whether there exists a direct reduction from Numberlink or Wataridori to Mintonette.

Another interesting direction is to study the solution structure of Wataridori. In particular, it remains open whether Wataridori is ASP-complete, that is, whether the problem of deciding if another solution exists given one solution is NP-complete.

\subsubsection*{Acknowledgement}
This work was supported by grants for development of new faculty staff, Ratchadaphiseksomphot Fund, Chulalongkorn University.


\end{document}